\definecolor{webgreen}{rgb}{0,.5,0}
\definecolor{webbrown}{rgb}{.6,0,0}
\newcommand{\seqnum}[1]{\href{https://oeis.org/#1}{\underline{#1}}}
\newcommand{\N}{\mathbb{N}}
\newcommand{\R}{\mathbb{R}}
\newcommand{\lef}{\left}
\newcommand{\rig}{\right}
\newcommand{\fl}[1]{\ensuremath{\lef\lfloor #1 \rig\rfloor}}
\newcommand{\cl}[1]{\ensuremath{\lef\lceil #1 \rig\rceil}}
\newcommand{\dd}{\mathop{}\!d}
\renewcommand{\a}{\alpha}
\renewcommand{\b}{\beta}
\author{Antonio Molina Lovett
and Jeffrey Shallit \\
School of Computer Science\\
University of Waterloo\\
Waterloo, ON  N2L 3G1 \\
Canada\\
{\tt ajmolina@uwaterloo.ca}\\
{\tt shallit@uwaterloo.ca} 
}
\title{Optimal Regular Expressions for Permutations}
\begin{document}

\maketitle

\theoremstyle{plain}
\newtheorem{theorem}{Theorem}
\newtheorem{corollary}[theorem]{Corollary}
\newtheorem{lemma}[theorem]{Lemma}
\newtheorem{proposition}[theorem]{Proposition}

\theoremstyle{definition}
\newtheorem{definition}[theorem]{Definition}
\newtheorem{example}[theorem]{Example}
\newtheorem{conjecture}[theorem]{Conjecture}

\theoremstyle{remark}
\newtheorem{remark}[theorem]{Remark}

\def\Que{{\mathbb{Q}}}
\def\Enn{{\mathbb{N}}}
\def\Zee{{\mathbb{Z}}}

\begin{abstract}
	The permutation language $P_n$ consists of all words that are permutations of a fixed alphabet of size $n$.  Using divide-and-conquer, we construct a regular expression $R_n$ that specifies $P_n$. We then give explicit bounds for the length of $R_n$, which we find to be $4^nn^{-(\lg n)/4+\Theta(1)}$, and use these bounds to show that $R_n$ has minimum size over all regular expressions specifying $P_n$.
\end{abstract}

\section{Introduction}

Given a regular language $L$ defined in some way, it is a challenging problem to find
good upper and lower bounds on the size of the smallest regular
expression specifying $L$.   (In this paper, by a regular expression, we always mean one using the operations of union, concatenation, and Kleene closure only.)  Indeed, as a computational problem,
it is known that determining the shortest regular expression corresponding to an NFA
is PSPACE-hard~\cite{Meyer&Stockmeyer:1972}.  Jiang and Ravikumar proved the analogous result for DFAs~\cite{Jiang&Ravikumar:1993}.  For more recent results on inapproximability, see~\cite{Gramlich&Schnitger:2007}.  

For nontrivial families of languages, only
a handful of results are already known.   For example, 
Ellul et al.~\cite{Ellul&Krawetz&Shallit&Wang:2005} showed that the shortest regular expression for the language
$\{ w \in \{ 0,1 \}^n \, : \, |w|_1 \text{ is even} \}$ is of length
$\Omega(n^2)$.    Here $|w|_1$ denotes the number of occurrences of the symbol $1$ in the word $w$. (A simple divide-and-conquer strategy provides a matching upper bound.) 
Chistikov et al.~\cite{Chistikov&Ivan&Lubiw&Shallit:2017}
showed that the regular language
$$ \{ ij \ : \ 1 \leq i < j \leq n \} $$
can be specified by a regular expression of size exactly $n (\lfloor \log_2 n \rfloor + 2) - 2^{\lfloor \log_2 n\rfloor + 1}$, and furthermore this bound is optimal.  Mousavi~\cite{Mousavi:2017} developed a general program for computing lower bounds on regular expression size for the binomial languages 
$$B(n,k) = \{ w \in \{ 0, 1 \}^n \, : \, |w|_1 = k \}.$$

Let $n$ be a positive integer, and define $\Sigma_n = \{ 1,2,\ldots, n \}$.  In this paper we study the finite language $P_n$ consisting of all permutations
of $\Sigma_n$.  Thus, for example,
$$ P_3 = \{ 123,132,213,231,312,321 \}.$$
We are interested in regular expressions that specify $P_n$.   In counting the length of regular expressions, we adopt the conventional measure of
{\it alphabetic length} (see, for example, \cite{Ehrenfeucht&Zeiger:1976}):  the length of a regular expression is the number of occurrences of symbols of the alphabet $\Sigma_n$.   Thus, other symbols, such as parentheses and {\tt +}, are ignored.

A brute-force solution, which consists of listing all the members of $P_n$ and separating them by the union symbol {\tt +}, evidently gives a regular expression for $P_n$ of alphabetic length $n \cdot n!$.
This can be improved to $n!\sum_{0 \leq i < n} 1/i! \sim e\cdot n!$
by tail recursion, where $E(S)$ represents a regular expression for all permutations of the symbols of $S$:
$$E (S) = \sum_{i \in S} i(E(S-\{ i \})); \quad E({i}) = i .$$
For example, for $P_4$ this gives\\
\centerline{
{\tt 1(2(34+43)+3(24+42)+4(23+32))+2(1(34+43)+3(14+41)+4(13+31))+}} \\
\centerline{
\quad \quad {\tt 3(1(24+42)+2(14+41)+4(12+21))+4(1(23+32)+2(13+31)+3(12+21))}.}
Can we do better?

Ellul et al.~\cite{Ellul&Krawetz&Shallit&Wang:2005} proved the following weak lower bound:   every regular expression for $P_n$ has alphabetic length at least $2^{n-1}$.  In this note we derive an upper bound through divide-and-conquer. We then show that the regular expression this strategy produces is, in fact, actually optimal.  This improves the result from \cite{Ellul&Krawetz&Shallit&Wang:2005}.  The language $P_n$ is of particular interest because its complement has short regular expressions, as shown in
\cite{Ellul&Krawetz&Shallit&Wang:2005}.  For other results concerning context-free grammars for $P_n$, see \cite{Ellul&Krawetz&Shallit&Wang:2005,Asveld:2006,Asveld:2008,Filmus:2011}.

\section{Divide-and-conquer}
\label{sec:divconq}

Consider the following divide-and-conquer strategy.  Let $S$ be an alphabet of cardinality $n$.   We consider all subsets $T \subseteq S$ of cardinality $\lfloor n/2 \rfloor$.  For each subset we recursively determine a regular expression for the permutations of $T$, a regular expression for the permutations of $S - T$, and concatenate them together.  This gives
\begin{equation}
	E(S) = \sum_{{T \subseteq S}\atop {|T| = \lfloor n/2 \rfloor}} (E(T))(E(S-T)); \quad E({i}) = i . \label{rec-defn}
\end{equation}
Finally, we define $R_n = E(\Sigma_n)$.

Thus, for example, we get\\[.1in]
\centerline{$R_4 =$ {\tt (12+21)(34+43)+(13+31)(24+42)+(23+32)(14+41)+}}\\
\centerline{{\tt (14+41)(23+32)+(24+42)(13+31)+(34+43)(12+21)}}\\
for $P_4$.

The alphabetic length of the resulting regular expression $R_n$ for all permutations of $\Sigma_n$ is then $f(n)$,
where 
$$
f(n) = \begin{cases}
    1, & \text{if $n = 1$}; \\[.1in]
    {\dbinom{n} {\lfloor n/2 \rfloor}} \bigl( f (\lfloor n/2 \rfloor) +
    f(\lceil n/2 \rceil) \bigr), & \text{if $n > 1$.}
    \end{cases}
$$
The first few values of $f(n)$ are given in the table below.
\begin{table}[H]
\begin{center}
\begin{tabular}{|c|c|}
\hline
$n$ & $f(n)$\\
\hline
1 & 1 \\
2 & 4 \\
3 & 15 \\
4 & 48 \\
5 & 190 \\
6 & 600 \\
7 & 2205 \\
8 & 6720 \\
9 & 29988 \\
10 & 95760 \\
\hline
\end{tabular}
\end{center}
\end{table}
\noindent It is sequence \seqnum{A320460} in the {\it On-Line Encyclopedia of Integer Sequences} \cite{Sloane:2018}.

It seems hard to determine a simple closed-form expression for $f(n)$.  Nevertheless, we can roughly estimate it as follows, at least when $n = 2^m$ is a power of $2$:
\begin{align*}
f(2^m) &= 2 {{2^m}\choose{2^{m-1}}} f(2^{m-1}) \\
&= 2^m {{2^m}\choose{2^{m-1}}} 
{{2^{m-1}}\choose{2^{m-2}}} \cdots 
{2 \choose 1}  \\
&= 2^m {{ (2^m)!} \over { (2^{m-1})! \, 
(2^{m-2})! \, \cdots \, 2! \,  1!}} .
\end{align*}
Substituting the Stirling approximation
$n! \sim  \sqrt{2 \pi n} (n/e)^n$ and
simplifying,
we get that $f(2^m)$ is roughly
equal to 
$$ 4^{2^m} e^{-1} \pi^{(1-m)/2} 2^{-(m^2-5m+6)/4}.$$
To make this precise, and make it work when $n$ is
not a power of $2$, however, takes more
work.

The rest of the paper is organized as follows: in Section~\ref{sec:opt}, we prove that our regular expression is in fact optimal, assuming one result that is proven at the end of the paper. In Section~\ref{analy}, we establish some inequalities related to Stirling's formula. In Section~\ref{bounds}, we connect these inequalities to $f(n)$ and obtain the estimate mentioned in the abstract.  Finally, in Section~\ref{sec:opt-choice} we use our obtained bounds on $f(n)$ to provide the missing piece in our optimality proof.

\section{Optimality}
\label{sec:opt}

In order to show that our regular expression has minimum possible length, we use the following property of $f(n)$ that we prove in Section~\ref{sec:opt-choice}:

\begin{restatable}{lemma}{optchoice}
\label{lem:opt-choice}
	If $n\ge1$, then every integer $0<k<n$ satisfies $\binom nk(f(k)+f(n-k))\ge f(n)$. Equality occurs if and only if $k=\fl{n/2}$ or $k=\cl{n/2}$.
\end{restatable}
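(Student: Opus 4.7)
The plan is to combine the sharp bounds on $f(n)$ from Section~\ref{bounds} with a two-regime analysis in $k$. Write $g(n, k) := \binom{n}{k}(f(k) + f(n-k))$. The symmetry $g(n, k) = g(n, n-k)$ lets us reduce to $1 \leq k \leq \fl{n/2}$. The equality $g(n, \fl{n/2}) = f(n)$ is immediate from the recurrence defining $f$ (and by symmetry the same holds at $k = \cl{n/2}$), so the remaining task is to prove the strict inequality $g(n, k) > f(n)$ for $1 \leq k < \fl{n/2}$.

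For $k$ close to $\fl{n/2}$, I would analyze the ratio
$$\frac{g(n, k)}{g(n, k+1)} = \frac{k+1}{n-k} \cdot \frac{f(k) + f(n-k)}{f(k+1) + f(n-k-1)}.$$
The bounds of Section~\ref{bounds} give $f(m)/f(m-1) = 4 \cdot (1 + o(1))$ up to controlled factors, and when $k$ is sufficiently close to $n/2$ the leading factor $(k+1)/(n-k)$ is close to $1$, so the whole ratio exceeds $1$. This yields a strictly decreasing chain $g(n, k) > g(n, k+1) > \cdots > g(n, \fl{n/2}) = f(n)$ in that range. For $k$ below this threshold, I would instead use the crude bound $g(n, k) \geq \binom{n}{k} f(n-k)$ and compare directly with $f(n)$ by plugging in the estimate $f(m) = 4^m m^{-(\lg m)/4 + \Theta(1)}$. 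Since $\binom{n}{k} \cdot 4^{-k}$ is at least of order $(n/(4k))^k$ (up to lower-order corrections coming from the logarithmic factor in the estimate for $f$), this easily dominates $1$ when $k$ is in the small range.

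The main obstacles are threefold: (i) choosing the threshold between the two regimes so that the ranges jointly cover all $k \in \{1, \ldots, \fl{n/2}-1\}$; (ii) handling small $n$, where the asymptotic bounds may not yet be sharp enough, via direct calculation from the table of $f(n)$; and (iii) the fact that $k \mapsto g(n, k)$ is not globally monotone on $\{1, \ldots, \fl{n/2}\}$---for example, $g(10, 1) = 299890 < g(10, 2) = 302580$---so no single descent argument suffices and one must genuinely switch to absolute inequalities in the small-$k$ regime. The uniqueness of the equality cases follows automatically because the estimates in both regimes are strict whenever $k \notin \{\fl{n/2}, \cl{n/2}\}$.
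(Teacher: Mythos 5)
Your skeleton matches the paper's --- reduce by symmetry, read off equality at $k=\fl{n/2}$ from the recurrence, split the remaining $k$ into a near-middle regime and a small-$k$ regime, and check small $n$ by hand --- but the quantitative engine you propose for the near-middle regime does not work, and you have correctly flagged but not resolved the resulting failure of the two regimes to meet. The claim that $f(m)/f(m-1)=4(1+o(1))$ is false: Theorem~\ref{thm:fn-bounds} determines $f$ only up to a multiplicative factor of order $n^{\lg 5-2}$, so it cannot control consecutive ratios to within $1+o(1)$, and in fact the recurrence gives $f(2m)/f(2m-1)=4f(m)/(f(m)+f(m-1))\le 4/(1+1/4.5)<3.3$, so the ratio dips toward $3$ infinitely often (e.g.\ $f(8)/f(7)\approx3.05$). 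The sharp inequality that is actually available is $f(m+1)\ge 3f(m)$ (the paper's Lemma~\ref{lem:3fn}), and with a ratio of $3$ your consecutive-ratio chain $g(n,k)>g(n,k+1)$ needs roughly $3(k+1)/(n-k)>1$, i.e.\ $k\gtrsim n/4$; it genuinely breaks below that (for instance $g(40,8)<g(40,9)$, since $f(32)/f(31)\approx 3.00$ while $32/9>3.5$). Meanwhile your small-$k$ estimate $\binom nk4^{-k}\gtrsim(n/(4k))^k$ collapses to $1$ exactly at $k=n/4$ and cannot absorb the $n^{\Theta(1)}$ slack between the upper and lower bounds of Theorem~\ref{thm:fn-bounds}, so as written neither regime covers $k$ near $n/4$.

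The paper closes this gap by not running a consecutive chain at all in the middle range. For $n/6\le k<\fl{n/2}$ it compares $\binom nk$ to $\binom n{\fl{n/2}}$ directly, rewriting the quotient as a product of factors $\frac{k+j}{\cl{n/2}+1+j}$ (a different pairing of numerators with denominators than the consecutive one), each of which is at least $\frac{k+1}{\cl{n/2}+2}\ge 1/3$ precisely because $k\ge n/6$; each such factor is matched against one factor of $3$ coming from $f(n-k)\ge 3^{\fl{n/2}-k}f(\cl{n/2})$. This pushes the ``ratio'' regime all the way down to $k=n/6$, so the absolute regime only needs $k<n/6$, where the crude bound $\binom nk\ge n\,4^{k-1}$ beats the polynomial slop in Theorem~\ref{thm:fn-bounds} with room to spare. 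To repair your proof you would need either this re-paired telescoping (or an equivalent global comparison) in the middle range, or a substantially sharper lower bound on $\binom nk$ (e.g.\ an entropy bound) to push the absolute regime past $k=n/4$; the ingredients you cite do not suffice.
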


For $n\ge1$ and $1\le k\le n!$, define $\ell(n,k)$ to be the minimum alphabetic length of a regular expression specifying a subset of $P_n$, where the subset has cardinality at least $k$.

\begin{lemma}
\label{lem:main-opt}
	If $n\ge1$ and $1\le k\le n!$, then $\ell(n,k)/k \ge \ell(n,n!)/n! \ge f(n)/n!$.
\end{lemma}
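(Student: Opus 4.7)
The plan is to prove a slightly stronger statement by strong induction on $n$: every regular expression $E$ with $L(E)\subseteq P_n$ satisfies $|E|\ge |L(E)|\cdot f(n)/n!$, where $|E|$ denotes alphabetic length. This at once yields $\ell(n,k)\ge k\cdot f(n)/n!$, hence $\ell(n,k)/k\ge f(n)/n!$; since $R_n$ certifies $\ell(n,n!)\le f(n)$, equality holds at $k=n!$ and the full chain $\ell(n,k)/k\ge \ell(n,n!)/n!\ge f(n)/n!$ then collapses out.

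I would prove the strengthening by strong induction on $n$, carrying out a secondary structural induction on $E$ inside each step. The base case $n=1$ is immediate. When $n\ge 2$, a single-letter atom or any Kleene star can only specify the empty subset of $P_n$ (since $P_n$ contains no empty word and no single-letter word), so those cases are vacuous. The union case $E=E_1+E_2$ follows directly from the structural hypothesis applied to $E_1$ and $E_2$ combined with $|L(E_1+E_2)|\le|L(E_1)|+|L(E_2)|$.

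The concatenation case $E=E_1E_2$ is the main obstacle and is precisely where Lemma~\ref{lem:opt-choice} is used. Here I would first establish a structural normalization: if $L_1L_2\subseteq P_n$ is nonempty, then there is a common set $T\subseteq\Sigma_n$ of some cardinality $j$ such that every word of $L_1$ is a permutation of $T$ and every word of $L_2$ is a permutation of $\Sigma_n\setminus T$. Indeed, fixing any $v\in L_2$, the multiset condition on $uv\in P_n$ forces every $u\in L_1$ to use exactly the letters of $\Sigma_n\setminus(\text{letters of }v)$; any other choice would produce a cross-concatenation outside $P_n$. This forces $|L|=|L_1|\cdot|L_2|$ with $|L_1|\le j!$ and $|L_2|\le (n-j)!$. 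The degenerate cases $j\in\{0,n\}$ reduce to a single subexpression and are closed by structural induction.

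For $0<j<n$, the outer inductive hypothesis at the sizes $j$ and $n-j$, applied after identifying $T$ and $\Sigma_n\setminus T$ with $\Sigma_j$ and $\Sigma_{n-j}$ (alphabetic length is invariant under such relabeling), gives
\[
|E| \;\ge\; |L_1|\,\frac{f(j)}{j!} \;+\; |L_2|\,\frac{f(n-j)}{(n-j)!}.
\]
Dividing by $|L|=|L_1||L_2|$ and then bounding $|L_2|\le (n-j)!$ in the first fraction and $|L_1|\le j!$ in the second produces
\[
\frac{|E|}{|L|} \;\ge\; \frac{f(j)+f(n-j)}{j!\,(n-j)!} \;=\; \binom{n}{j}\frac{f(j)+f(n-j)}{n!} \;\ge\; \frac{f(n)}{n!},
\]
the final inequality being exactly Lemma~\ref{lem:opt-choice}. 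This closes the induction and hence establishes the lemma.
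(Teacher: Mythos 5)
Your proof is correct and follows essentially the same route as the paper's: the same union/concatenation case analysis, the same mediant-style estimates in each case, and the same appeal to Lemma~\ref{lem:opt-choice} to close the concatenation case. The only differences are in packaging --- you run a structural induction on the expression to prove the per-expression bound $|E|\ge|L(E)|\,f(n)/n!$ instead of the paper's lexicographic induction on $(n,k)$ applied to a minimum-length expression, and you spell out the alphabet-splitting normalization that the paper compresses into ``possibly after changing alphabets.''
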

\begin{proof}
	We prove this by induction over the lexicographical ordering of pairs $(n,k)$. This is easy for our base case $n=k=1$, as the best regular expression is a single character. We thus suppose $n\ge2$.

	Consider a regular expression for a subset of $P_n$ of cardinality at least $k\geq 1$ that has minimum alphabetic length.   Clearly no such expression will involve $\epsilon$
	or $\emptyset$.
	We now consider the possibilities for the last (outermost) operation in the regular expression.  Clearly the only relevant possibilities are union and concatenation.

	If the last operation is a union, then it is the union of two subsets of $P_n$ of cardinalities $k_1,k_2\ge1$ where $k_1+k_2\ge k$, and $k_1,k_2<k$ by minimality. Then we get
	\begin{align*}
		\frac{\ell(n,k)}k
		&\ge \frac{\ell(n,k_1)+\ell(n,k_2)}{k_1+k_2} \\
		&\ge \min\lef\{\frac{\ell(n,k_1)}{k_1},\frac{\ell(n,k_2)}{k_2}\rig\} \\
		&\ge \frac{\ell(n,n!)}{n!}\ge\frac{f(n)}{n!}.
	\end{align*}
	
	If the last operation is a concatenation, then it is the concatenation of two regular expressions for subsets of $P_{n_1}$ and $P_{n_2}$ of cardinalities $k_1$ and $k_2$ respectively (possibly after changing alphabets) where $n_1+n_2=n$ and $k_1k_2\ge k$. By minimality, we have $n_1,n_2,k_1,k_2$ all positive, so $n_1,n_2<n$. We now obtain
	\begin{align*}
		\frac{\ell(n,k)}k
		&\ge \frac{\ell(n_1,k_1)+\ell(n_2,k_2)}{k_1k_2} \\
		&\ge \frac{\ell(n_1,n_1!)+\ell(n_2,k_2)}{n_1!\ k_2} \\
		&\ge \frac{\ell(n_1,n_1!)+\ell(n_2,n_2!)}{n_1!\ n_2!} \\
		&\ge \frac{f(n_1)+f(n_2)}{n_1!\ n_2!} \\
		&= \frac1{n!}\binom n{n_1}(f(n_1)+f(n-n_1)) \\
		&\ge \frac1{n!}\binom n{\fl{n/2}}(f(\fl{n/2})+f(\cl{n/2})) & \text{(by Lemma~\ref{lem:opt-choice})} \\
		&= \frac{f(n)}{n!}.
	\end{align*}
	In both cases, we get the desired inequalities for these choices of $n$ and $k$, completing our induction.
\end{proof}
\begin{theorem}
	Let $n\ge1$. Over all regular expressions for the permutation language $P_n$, the regular expression $R_n$ given by our divide-and-conquer strategy achieves the minimum alphabetic length.
\end{theorem}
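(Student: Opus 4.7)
The plan is to derive the theorem as an essentially immediate corollary of Lemma~\ref{lem:main-opt} by specializing $k$ to $n!$. The key observation is that a subset of $P_n$ with cardinality at least $n! = |P_n|$ must be all of $P_n$, so $\ell(n,n!)$ coincides exactly with the minimum alphabetic length of a regular expression denoting $P_n$. Therefore, the second inequality in Lemma~\ref{lem:main-opt}, namely $\ell(n,n!)\ge f(n)$, gives the desired lower bound: every regular expression specifying $P_n$ has alphabetic length at least $f(n)$.

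For the matching upper bound, I would invoke the construction in Section~\ref{sec:divconq}: $R_n = E(\Sigma_n)$ is a regular expression for $P_n$ whose alphabetic length equals $f(n)$ by the very recurrence defining $f$, since the recursion in \eqref{rec-defn} sums over $\binom{n}{\lfloor n/2\rfloor}$ terms, each contributing $f(\lfloor n/2\rfloor)+f(\lceil n/2\rceil)$ alphabetic symbols. A brief induction confirms this. Combining the lower bound from Lemma~\ref{lem:main-opt} with this exact calculation yields the theorem.

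There is no real obstacle at this step: all the difficulty has been deferred to Lemma~\ref{lem:main-opt} (which performs the structural induction over regular expression parse trees) and to Lemma~\ref{lem:opt-choice} (which guarantees that the balanced split used in \eqref{rec-defn} is optimal among all possible splits). Once those two lemmas are in hand, the theorem is just the specialization $k=n!$ together with the remark that $R_n$ realizes equality.
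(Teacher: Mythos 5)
Your proposal matches the paper's own proof: both obtain the lower bound by specializing Lemma~\ref{lem:main-opt} to $k=n!$ (noting that a subset of $P_n$ of cardinality at least $n!$ is all of $P_n$) and obtain the matching upper bound from the construction of $R_n$, whose alphabetic length is $f(n)$ by the defining recurrence. The argument is correct and all the substance is, as you note, deferred to Lemmas~\ref{lem:main-opt} and~\ref{lem:opt-choice}.
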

\begin{proof}
	By our construction from Section~\ref{sec:divconq}, the regular expression $R_n$ specifies the entirety of $P_n$ and has alphabetic length $f(n)$. We thus get the upper bound $\ell(n,n!)\le f(n)$. By Lemma~\ref{lem:main-opt}, we have the matching lower bound $\ell(n,n!)\ge f(n)$. Thus, $R_n$ has minimum possible alphabetic length for a regular expression specifying $P_n$.
\end{proof}


\section{Analysis}

In what follows we use $\ln$ to denote the natural logarithm,
and $\lg$ to denote logarithms to the base $2$.

\label{analy}
	Define $S:\R_{>0}\to\R_{>0}$ to be the usual Stirling approximation~\cite{Robbins:1955}:
	\begin{displaymath}
		S(x) = \sqrt{2\pi x}(x/e)^x .
	\end{displaymath}

\begin{lemma}
\label{lem:sa}
    For every $x\ge1$, we have the bounds
    \begin{displaymath}
        S\bigl(x+\frac12\bigr)^2 \le S(x)S(x+1) \le e^{1/(2x)}S\bigl(x+\frac12\bigr)^2 .
    \end{displaymath}
\end{lemma}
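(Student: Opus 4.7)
The plan is to take logarithms and reduce the two inequalities to a two-sided bound on a second difference of $L(t) := \ln S(t) = \tfrac12\ln(2\pi t) + t\ln t - t$. Setting
\[
D(x) := L(x) + L(x+1) - 2L(x+\tfrac12),
\]
the lemma is equivalent to the assertion that $0 \le D(x) \le 1/(2x)$ for all $x \ge 1$.

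First I would compute $L'(t) = \ln t + 1/(2t)$ and $L''(t) = 1/t - 1/(2t^2) = (2t-1)/(2t^2)$, which is strictly positive for $t \ge 1$, so $L$ is convex on $[1,\infty)$. Applying the fundamental theorem of calculus twice---first to telescope $L(x+1) - L(x+\tfrac12)$ against $L(x+\tfrac12) - L(x)$, then to convert the resulting difference of $L'$ values into an integral of $L''$---yields the representation
\[
D(x) = \int_0^{1/2}\!\!\int_0^{1/2} L''(x+r+s)\,dr\,ds.
\]

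The lower bound $D(x) \ge 0$ is then immediate, since the integrand is nonnegative on $[x,x+1]$. For the upper bound, I would estimate $L''(u) = 1/u - 1/(2u^2) \le 1/u \le 1/x$ for $u \in [x, x+1]$, giving $D(x) \le (1/x)\cdot(1/2)^2 = 1/(4x) \le 1/(2x)$. Exponentiating recovers the claimed inequalities on $S(x)S(x+1)$; in fact the tighter constant $e^{1/(4x)}$ falls out for free.

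The whole argument turns on a single observation: the ratio $S(x)S(x+1)/S(x+\tfrac12)^2$ is the exponential of a central second difference of the smooth, convex function $L$. Once that is recognised, the bounds simply reflect the sign and decay of $L''$. The only mildly technical step is the double-integral representation of $D$, but this is completely routine; I do not foresee any real obstacle. (An entirely equivalent alternative is to invoke the second-difference mean value identity $D(x) = \tfrac14 L''(\xi)$ for some $\xi \in (x,x+1)$, obtained by a standard application of Rolle's theorem, after which the same estimates on $L''$ close the argument.)
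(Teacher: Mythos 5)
Your proof is correct, and it rests on the same underlying fact as the paper's --- the convexity of $\ln S$ on $[1,\infty)$ and explicit control of its derivatives --- but the execution is genuinely tidier. The paper treats the two inequalities asymmetrically: the lower bound comes from Jensen's inequality applied to the convex function $\ln S$, while the upper bound comes from two separate applications of the mean value theorem to $\ln S$ on $[x,x+\tfrac12]$ and $[x+\tfrac12,x+1]$, bounding the resulting difference of first derivatives by $\tfrac12\ln\frac{x+1}{x}\le\frac1{2x}$. You instead package both bounds into the single identity
\begin{displaymath}
L(x)+L(x+1)-2L\bigl(x+\tfrac12\bigr)=\int_0^{1/2}\!\!\int_0^{1/2}L''(x+r+s)\,\dd r\,\dd s,
\end{displaymath}
after which nonnegativity of $L''=(2t-1)/(2t^2)$ gives the lower bound and the crude estimate $L''(u)\le1/x$ on $[x,x+1]$ gives the upper bound. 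What this buys you is uniformity (one representation handles both directions) and a sharper constant, $e^{1/(4x)}$ in place of $e^{1/(2x)}$, which of course still implies the stated lemma; what the paper's route buys is that it avoids any integral representation and quotes only Jensen and the mean value theorem. Both arguments are complete and correct.
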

\begin{proof}

	The first two derivatives of $\ln S(x)$ are
	\begin{align*}
		\frac\dd{\dd x}\ln S(x) &= \frac1{2x}+\ln x \\[.1in]
		\frac{\dd^2}{\dd x^2} \ln S(x) &= -\frac1{2x^2} + \frac1x,
	\end{align*}
	and so we see that $\ln S(x)$ is convex (that is, its derivative is increasing) for all $x>1/2$. Thus, by Jensen's inequality~\cite{Jensen:1906} and exponentiating, we obtain the lower bound
	\begin{displaymath}
		S(x)S(x+1) \ge S \bigl(x+\frac12 \bigr)^2
	\end{displaymath}
	for all $x\ge1$.
	
	Now, using the mean value theorem twice, we get
	that 
	\begin{align}
	    (\ln S(x)) + {1\over 2}\mu &\leq \ln S(x + {1 \over 2}) \label{ap1}\\
	    \ln S(x+1) &\leq (\ln S(x+{1 \over 2})) + {1 \over 2} M,
	    \label{ap2}
	\end{align}
	where 
	\begin{align*}
	\mu &= \inf_{z \in [x,x+{1 \over 2}]} (\ln S(z))' = {1 \over {2x}} + \ln x \\
	M &= \sup_{z \in [x+{1 \over 2},x+1]} (\ln S(z))' = {1 \over {2(x+1)}} + \ln(x+1) .
	\end{align*}
    Adding the inequalities \eqref{ap1} and \eqref{ap2},
    we get
    \begin{align*}
        (\ln S(x)) + (\ln S(x+1)) &\leq (2 \ln S(x+1/2)) - \mu/2 + M/2 \\
        & \leq (2 \ln S(x+1/2)) + {1 \over 2}\ln({{x+1} \over x}) \\
        & \leq (2 \ln S(x+1/2)) + {1 \over {2x}}.
    \end{align*}
    This gives us the inequality
    	\begin{displaymath}
    		S(x)S(x+1) \le e^{1/(2x)}S(x+1/2)^2
    	\end{displaymath}
    for all $x\ge1$.
\end{proof}

Next, for $\a\in\R$, define the function $g_\a:\R_{>0}\to\R_{>0}$ by
\begin{displaymath}
	g_\a(x) = \frac{4^x}{x^{(\lg x)/4}}x^\a.
\end{displaymath}
Our goal is to show that $f$ can be approximated by $g_\a$ for some choice of $\a$.
\begin{lemma}
\label{lem:ga}
    Let $\a>0$. Then for every $x\ge4^\a$ we have
    \begin{displaymath}
        e^{-1/(2\sqrt x)}\frac52g_\a(x+\frac12) \le g_\a(x)+g_\a(x+1) \le e^{1/(2\sqrt x)}\frac52g_\a(x+\frac12) .
    \end{displaymath}
\end{lemma}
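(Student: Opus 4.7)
The plan is to divide through by $g_\a(x+1/2)$ and compare ratios. Set $\phi(x)=g_\a(x)/g_\a(x+1/2)$ and $\psi(x)=g_\a(x+1)/g_\a(x+1/2)$; the statement is equivalent to $|\ln(2(\phi+\psi)/5)|\le 1/(2\sqrt x)$. Let $h(x)=\ln g_\a(x) = x\ln 4 - (\ln x)^2/(4\ln 2) + \a\ln x$, and let $k(x) = h(x) - x\ln 4$ be its non-linear part. Because the linear contribution $x\ln 4$ yields exactly $\mp\ln 2$ in the exponents $h(x)-h(x+1/2)$ and $h(x+1)-h(x+1/2)$, setting $m=\frac{k(x)+k(x+1)}{2}-k(x+1/2)$ and $\delta=\frac{k(x+1)-k(x)}{2}$ gives
\begin{displaymath}
	\phi(x)+\psi(x) = e^m\bigl(e^{-\ln 2-\delta} + e^{\ln 2+\delta}\bigr) = 2e^m\cosh(\ln 2+\delta).
\end{displaymath}
Since $2\cosh(\ln 2) = 5/2$, we get $\phi+\psi\approx 5/2$ whenever $m,\delta$ are small.

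The first concrete step is to compute $k'(x) = \a/x - (\lg x)/(2x)$ and $k''(x) = \frac{1}{x^2}\bigl(\frac{\lg x}{2} - \frac{1}{2\ln 2} - \a\bigr)$, and note that the hypothesis $x\ge 4^\a$, that is $\lg x\ge 2\a$, forces $k'(x)\le 0$ with $|k'(x)|\le (\lg x)/(2x)$, and gives two-sided control $|k''(x)|\le C(\lg x)/x^2$ for an explicit $C$. Integrating $k'$ over $[x,x+1]$ then bounds $|\delta|$ by a constant times $(\lg x)/x$, while Taylor's theorem applied to $k$ around $x+1/2$ bounds $|m|$ by a constant times $(\lg x)/x^2$.

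Expanding $\ln\cosh$ around $\ln 2$ using $\tanh(\ln 2)=3/5$ produces
\begin{displaymath}
	\ln\bigl(\tfrac{2}{5}(\phi+\psi)\bigr) = m + \tfrac{3}{5}\delta + O(\delta^2),
\end{displaymath}
which is $O((\lg x)/x)$ overall. The final step is to verify that this fits inside $\pm 1/(2\sqrt x)$ for every $x\ge 4^\a$. The main obstacle is that the crude inequality $(\lg x)/x\le 1/(2\sqrt x)$ is not sharp for small $x$, so some care is needed to keep the constants honest: I would substitute the explicit formulas for $k'$ and $k''$ rather than using abstract bounds, exploit that $\delta\le 0$ on $[4^\a,\infty)$ to track signs, and if necessary verify a bounded initial interval numerically. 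In spirit the argument parallels the proof of Lemma~\ref{lem:sa}: the mean value theorem converts differential estimates on a Stirling-like function into multiplicative bounds, and the threshold $x\ge 4^\a$ is precisely what is needed to absorb the extra $\a\ln x$ and $(\ln x)^2/(4\ln 2)$ contributions.
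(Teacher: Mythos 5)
Your reduction $\phi+\psi=2e^{m}\cosh(\ln 2+\delta)$ is correct, the derivatives $k'$, $k''$ and the value $\tanh(\ln2)=3/5$ all check out, and this second-order route is genuinely different from the paper's. But the proposal stops exactly where the lemma's content lies: the quantitative claim $|m+\ln\cosh(\ln2+\delta)-\ln(5/4)|\le 1/(2\sqrt x)$ for \emph{every} $\a>0$ and every $x\ge4^\a$ is never established. You yourself flag the obstacle (the crude bound $(\lg x)/x\le 1/(2\sqrt x)$ fails for small $x$) and then defer it to ``keeping the constants honest'' or a numerical check. Two points make that deferral a real gap. First, $m$ and $\delta$ depend on the free parameter $\a$ and the domain is $[4^\a,\infty)$, so ``verify a bounded initial interval numerically'' is not a finite check until you eliminate $\a$ via $0<\a\le(\lg x)/2$ --- which works, but has to be said. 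Second, the inequality does go through only because the effective constant is $\tfrac35\cdot\tfrac14=\tfrac3{20}$: one needs roughly $\tfrac{3}{20}\lg(x+1)/x+|m|\le 1/(2\sqrt x)$, i.e., a bound on $\lg(x+1)/\sqrt x$ (whose maximum is about $1.17$), and none of this arithmetic appears. If you pursue this route, note that convexity of $\ln\cosh$ gives $\ln\cosh(\ln2+\delta)\ge\ln(5/4)+\tfrac35\delta$ exactly, removing the $O(\delta^2)$ term on the lower-bound side, while $\delta\le0$ (and $|\delta|\le\ln2$) makes the upper bound nearly immediate.

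For comparison, the paper's proof is first-order and avoids all of this. From $\ln x<\sqrt x$ it gets the two-sided derivative bound $\ln4-1/\sqrt x\le(\ln g_\a)'\le\ln4$ on $[4^\a,\infty)$; the mean value theorem on each half-interval $[x,x+\tfrac12]$ and $[x+\tfrac12,x+1]$ then gives $\tfrac12\le\phi\le\tfrac12e^{1/(2\sqrt x)}$ and $2e^{-1/(2\sqrt x)}\le\psi\le2$, and the decomposition $\tfrac52=\tfrac12+2$ absorbs the one-sided errors, since for instance $\tfrac12+2e^{-1/(2\sqrt x)}\ge\tfrac52e^{-1/(2\sqrt x)}$. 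Your framework is sharper than needed; the missing endgame is the proof.
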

\begin{proof}
	We again compute the logarithmic derivative:
	\begin{displaymath}
		\frac\dd{\dd x}\ln g_\a(x) = \frac{\a-(\lg x)/2}x+\ln4 .
	\end{displaymath}
	For $x\ge4^\a$, this derivative is at most $\ln 4$, so by the mean value theorem, 
	$$\ln g_\a(x+1)-\ln2\le\ln g_\a(x+\frac12)\le\ln g_\a(x)+\ln2 .$$ 
	Exponentiating, we get
	\begin{equation}
		\frac12g_\a(x+1) \le g_\a(x+\frac12) \le 2g_\a(x). \label{gm1}
	\end{equation}
	Next, we note that the derivative of $\ln x$ exceeds that of $\sqrt x$ for $0<x<4$ and is less for $x>4$. So, since $\ln4<\sqrt4$, we have $\ln x<\sqrt x$ for all $x>0$.  Hence
	\begin{align*}
    	\frac\dd{\dd x}\ln g_\a(x)
		&= \frac\a x-\frac{\ln x}{2x\ln2}+\ln4 \\
		&\ge \ln4-\frac{\sqrt x}{(2\ln2)x} \\
		&\ge \ln4-1/\sqrt x 
	\end{align*}
	for all $x>0$. Thus, by the mean value theorem and exponentiating again, we get
	\begin{equation}
		\frac12e^{1/(2\sqrt x)}g_\a(x+1)\ge g_\a(x+\frac12)\ge2e^{-1/(2\sqrt x)}g_\a(x). \label{gm2}
	\end{equation}
	We can now combine the inequalities \eqref{gm1} and \eqref{gm2} for $x\ge4^\a$ to get
	\begin{align*}
		e^{-1/(2\sqrt x)}\frac52g_\a(x+\frac12)
		&\le \frac12g_\a(x+\frac12)+2e^{-1/(2\sqrt x)}g_\a(x+\frac12) \\
		&\le g_\a(x)+g_\a(x+1) \\
		&\le 2e^{1/(2\sqrt x)}g_\a(x+\frac12)+\frac12g_\a(x+\frac12) \\
		&\le e^{1/(2\sqrt x)}\frac52g_\a(x+\frac12),
	\end{align*}
	which gives us both desired bounds.
\end{proof}

We now show an identity relating $g_\a$ and $S$.
\begin{lemma}
\label{lem:gaS}
	Suppose $x>0$ and $\b>0$. If $\a=\lg\b+1/4-(\lg\pi)/2$, then
	\begin{displaymath}
		\b\frac{S(2x)}{S(x)^2}g_\a(x)
		= g_\a(2x).
	\end{displaymath}
\end{lemma}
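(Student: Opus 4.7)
The plan is to verify the identity by direct algebraic computation: expand each of the ratios $S(2x)/S(x)^2$ and $g_\a(2x)/g_\a(x)$ in closed form, and observe that the definition $\a = \lg \b + 1/4 - (\lg \pi)/2$ is precisely what is needed to make the remaining scalar factors agree.

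First, I would compute $S(2x) = \sqrt{4\pi x}\,(2x/e)^{2x} = 2\sqrt{\pi x}\cdot 4^x x^{2x} e^{-2x}$ and $S(x)^2 = 2\pi x \cdot x^{2x} e^{-2x}$. The factors $x^{2x}$ and $e^{-2x}$ cancel between numerator and denominator, leaving
\begin{displaymath}
\frac{S(2x)}{S(x)^2} = \frac{4^x}{\sqrt{\pi x}} .
\end{displaymath}

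Next, I would expand
\begin{displaymath}
\frac{g_\a(2x)}{g_\a(x)} = \frac{4^{2x}}{4^x} \cdot \frac{x^{(\lg x)/4}}{(2x)^{(\lg 2x)/4}} \cdot \frac{(2x)^\a}{x^\a} .
\end{displaymath}
Using $\lg(2x) = 1 + \lg x$, the exponent on $x$ in the middle factor telescopes to $(\lg x)/4 - (1+\lg x)/4 = -1/4$, while the factor of $2$ contributes $2^{-(1+\lg x)/4} = 2^{-1/4}\,x^{-1/4}$ after rewriting $2^{-(\lg x)/4}$ as $x^{-1/4}$. Combined with the $4^x$ from the first factor and $2^\a$ from the third, the product simplifies to $g_\a(2x)/g_\a(x) = 4^x \cdot 2^{\a - 1/4} \cdot x^{-1/2}$.

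Equating $\b \cdot S(2x)/S(x)^2$ with $g_\a(2x)/g_\a(x)$ then reduces to the single scalar identity $\b / \sqrt{\pi} = 2^{\a - 1/4}$, which holds precisely when $\a = \lg \b + 1/4 - (\lg \pi)/2$. There is no real obstacle here; the result is pure bookkeeping, with the only mild subtlety being to expand $\lg(2x)$ carefully so that the two $x^{-1/4}$ contributions combine to produce exactly the $x^{-1/2}$ that matches the Stirling side.
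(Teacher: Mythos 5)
Your proposal is correct and follows essentially the same direct computation as the paper's proof: both simplify $S(2x)/S(x)^2$ to $4^x/\sqrt{\pi x}$ and then match the remaining powers of $2$, $\pi$, and $x$ using $\lg(2x)=1+\lg x$, with the condition on $\a$ falling out as the scalar identity $\b/\sqrt{\pi}=2^{\a-1/4}$. The only difference is organizational (you compare the two ratios rather than transforming the left side into the right), which is immaterial.
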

\begin{proof}
	We have
	\begin{align*}
		\b\frac{S(2x)}{S(x)^2}g_\a(x)
		&= \b\frac{\sqrt{4\pi x}(2x/e)^{2x}}{(\sqrt{2\pi x}(x/e)^x)^2}\frac{4^x}{x^{\lg x/4}}x^\a \\
		&= \b\frac{4^x}{\sqrt{\pi x}}\frac{4^x}{x^{\lg x/4}}x^\a \\
		&= \frac{2^{\lg\b+1/4}}{\pi^{1/2}x^{1/4}x^{1/4}2^{1/4}}\frac{4^{2x}}{x^{\lg x/4}}x^\a \\
		&= 2^{\lg\b+1/4-\lg\pi/2}\frac{4^{2x}}{2^{(1+\lg x)/4}x^{(1+\lg x)/4}}x^\a \\
		&= \frac{4^{2x}}{(2x)^{(\lg2x)/4}}(2x)^\a \\
		&= g_\a(2x). \qedhere
	\end{align*}
\end{proof}

\section{Bounds on \texorpdfstring{$f(n)$}{f(n)}}
\label{bounds}

In this section we obtain an estimate for $f(n)$, the size of the optimal regular expression for~$P_n$.
\begin{theorem}
\label{thm:fn-bounds}
	For all $n\ge1$ we have
	\begin{displaymath}
		0.195\frac{4^n}{n^{(\lg n)/4}}n^{5/4-(\lg\pi)/2} \le f(n) \le {1\over4}\frac{4^n}{n^{(\lg n)/4}}n^{(\lg5)-3/4-(\lg\pi)/2} .
	\end{displaymath}
	Further, when $n$ is a power of two, we get the following upper bound, matching the general lower bound.
	\begin{displaymath}
		f(n) \le {1 \over 4}\frac{4^n}{n^{\lg n/4}}n^{5/4-(\lg\pi)/2} .
	\end{displaymath}
\end{theorem}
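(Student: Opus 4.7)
The plan is to prove both bounds by induction on $n$, using the recursion $f(n) = \binom{n}{\fl{n/2}}(f(\fl{n/2})+f(\cl{n/2}))$ together with the three lemmas of Section~\ref{analy}. The exponents in the theorem are forced by a matching condition in Lemma~\ref{lem:gaS}: taking $\b=2$ gives $\a_- = 5/4-(\lg\pi)/2$, which governs an even step $f(2x) = 2\binom{2x}{x}f(x)$, while taking $\b=5/2$ gives $\a_+ = \lg 5 - 3/4 - (\lg\pi)/2$, which matches an odd step once the factor $5/2$ from Lemma~\ref{lem:ga} is incorporated. Every odd level of the recursion therefore loses a factor of $5/4$ relative to an even level; this is what produces the gap $\lg(5/4)$ between $\a_+$ and $\a_-$, and it disappears precisely when $n$ is a power of $2$.

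For the power-of-$2$ upper bound and the general lower bound we carry $\a_-$ through the induction. In the even case we apply Robbins's form of Stirling's formula to $\binom{2x}{x} = (2x)!/(x!)^2$ to express it as $S(2x)/S(x)^2$ up to a correction of the form $e^{\pm O(1/x)}$, and then invoke Lemma~\ref{lem:gaS} with $\b=2$ to conclude that $2\binom{2x}{x}g_{\a_-}(x)$ equals $g_{\a_-}(2x)$ up to that correction. In the odd case (only relevant for the lower bound when we use $\a_-$), we similarly expand $\binom{2x+1}{x}$ by Stirling, rewrite $S(x)S(x+1)$ as $S(x+\tfrac12)^2$ via Lemma~\ref{lem:sa} at the cost of another $e^{O(1/x)}$ factor, combine the inductive sum $g_{\a_-}(x)+g_{\a_-}(x+1)$ through Lemma~\ref{lem:ga} at the cost of $e^{O(1/\sqrt x)}$, and finally apply Lemma~\ref{lem:gaS} with $\b = 5/2$ at the half-integer point $x+\tfrac12$. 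Since $5/2>2$, an odd step only strengthens a lower bound phrased with $\a_-$; for the matching general upper bound we instead run the same induction with $\a_+$, which absorbs the $5/4$ loss exactly.

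The main obstacle will be the bookkeeping of the multiplicative error terms. After $r$ levels of recursion starting from $n$, one must control a telescoping product of factors of the form $e^{1/(12 n_k)}$, $e^{1/(2 n_k)}$, and $e^{1/(2\sqrt{n_k})}$, where $n_k \asymp n/2^k$ is the problem size at level $k$. These products are convergent, but to close the induction with the explicit constants $1/4$ and $0.195$ we will need to combine the asymptotic control with a direct hand-verification for small $n$ (using the table of values already computed), since the preconditions of the auxiliary lemmas (for instance $x\ge 4^\a$ in Lemma~\ref{lem:ga}) only hold once $n$ is moderately large. Once the error products are bounded by concrete numerical quantities and the base cases are checked, the inductive step produces exactly the stated inequalities.
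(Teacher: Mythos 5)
Your plan is essentially the paper's own proof: induction on $n$ through the recursion, Robbins's form of Stirling applied to the central binomial coefficient, Lemmas~\ref{lem:sa}, \ref{lem:ga}, and \ref{lem:gaS} with $\b=2$ (giving $\a=5/4-(\lg\pi)/2$ for the lower bound and the power-of-two upper bound) and $\b=5/2$ (giving $\a=\lg5-3/4-(\lg\pi)/2$ for the general upper bound), with the multiplicative error terms absorbed into a monotone correction factor and small $n$ checked by hand. The only part you defer --- closing the telescoping error product with the explicit constants $1/4$ and $0.195$ --- is handled in the paper by the concrete choices $r_n=Ce^{-\sqrt5/((4-\sqrt{10})\sqrt n)}$ and $r_n=Ce^{1/(3n)}$, which is exactly the bookkeeping you anticipate.
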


\begin{proof}
	Recall the Stirling approximation
	\begin{equation}
		e^{1/(12n+1)}S(n)\le n!\le e^{1/12n}S(n); \label{eqn:stirling}
	\end{equation}
	see \cite{Robbins:1955}.
	Now suppose that $f(n)\le r_ng_\a(n)$ and $f(n+1)\le r_{n+1}g_\a(n+1)$, where $n\ge\max\{1,4^\a\}$, for some non-decreasing function $r:\N\to\R_{>0}$.  Then by combining Lemma~\ref{lem:sa}, Lemma~\ref{lem:ga}, and equation~\eqref{eqn:stirling}, we get
	\begin{align*}
		f(2n+1)
		&= \binom{2n+1}n(f(n)+f(n+1)) \\
		&\le e^{\frac1{12(2n+1)}-\frac1{12n+1}-\frac1{12(n+1)+1}}\frac{S(2n+1)}{S(n)S(n+1)}(r_ng_\a(n)+r_{n+1}g_\a(n+1)) \\
		&\le \frac52r_{n+1}e^{1/(2\sqrt n)}\frac{S(2n+1)}{S(n+1/2)^2}g_\a(n+1/2) 
	\end{align*}
	and
	\begin{align*}
		f(2n)
		&= \binom{2n}n(f(n)+f(n)) \\
		&\le 2r_ne^{\frac1{12(2n)}-2\frac1{12n+1}}\frac{S(2n)}{S(n)^2}g_\a(n) \\
		&\le 2r_n\frac{S(2n)}{S(n)^2}g_\a(n).
	\end{align*}
	
	For the case where $n$ is a power of two only, we use $\b=2$ and $r_n=C$, we set 
	$$\a=\lg\b+1/4-(\lg\pi)/2=5/4-(\lg\pi)/2,$$ so $\a>0$ and $4^\a<2$. Now Lemma~\ref{lem:gaS} gives us the identity $2\frac{S(2x)}{S(x)^2}g_\a(x)=g_\a(2x)$. Then by induction we have $f(n)\le Cg_\a(n)$ for all $n\ge1$, where $C$ is any constant that satisfies this bound for $n<4$. In particular, $C={1 \over 4}$ works, so we have $f(n)\le {1 \over 4}g_{5/4-\lg\pi/2}(n)$ for all $n\ge1$ that are powers of two.

	Next, for general $n$, we use $\b=5/2$ and $r_n=Ce^{-\frac{\sqrt5}{(4-\sqrt{10})\sqrt n}}$, we set $$\a=\lg\b+1/4-\lg\pi/2=\lg5-3/4-\lg\pi/2,$$ so $\a>0$ and $4^\a<4$. Now Lemma~\ref{lem:gaS} gives us the identity $\frac52\frac{S(2x)}{S(x)^2}g_\a(x)=g_\a(2x)$. For $n\ge4$, we get
	\begin{align*}
		r_{n+1}e^{1/(2\sqrt n)}
		&= Ce^{\frac1{2\sqrt n}-\frac{\sqrt5}{(4-\sqrt{10})\sqrt{n+1}}} \\
		&\le C(e^{\frac1{\sqrt n}})^{\frac12-\frac{\sqrt5}{4-\sqrt{10}}\frac{\sqrt4}{\sqrt5}} \\
		&= C(e^{\frac1{\sqrt n}})^{-\frac{\sqrt{10}}{2(4-\sqrt{10})}} \\
		&= Ce^{-\frac{\sqrt5}{(4-\sqrt{10})\sqrt{2n}}} \\
		&\le r_{2n+1}.
	\end{align*}
	Easily, we also get $2r_n\le\frac52r_{2n}$. Thus, by induction we have $f(n)\le r_ng_\a(n)$ for all $n\ge12$, where $C$ is chosen to make this work for $12\le n<24$. In particular, $C={1\over4}$ works again. Further, since we have $r_n<C$ for all $n\ge1$, we also have $f(n)\le{1\over4}g_{\lg5-3/4-\lg\pi/2}(n)$ for all $n\ge12$. Finally, we check manually that this last inequality holds for $1\le n<12$ too, and thus for all $n\ge1$.

	All that remains is the lower bound. We get similar recurrences, supposing $f(n)\ge r_ng_\a(n)$ and $f(n+1)\ge r_{n+1}g_\a(n+1)$, where $n\ge\max\{1,4^\a\}$ for some non-increasing $r:\N\to\R_{>0}$.
	Then by a similar argument as for the upper bounds, we have
	\begin{align*}
		f(2n+1)
		&= \binom{2n+1}n(f(n)+f(n+1)) \\
		&\ge \frac52r_ne^{\frac1{12(2n+1)+1}-\frac1{12n}-\frac1{12(n+1)}}e^{-1/(2\sqrt n)}e^{-1/(2n)}\frac{S(2n+1)}{S(n+1/2)^2}g_\a(n+1/2) \\
		&\ge \frac52r_ne^{-1/(2\sqrt n)-2/(3n)}\frac{S(2n+1)}{S(n+1/2)^2}g_\a(n+1/2)
	\end{align*}
	and
	\begin{align*}
		f(2n)
		&= \binom{2n}n(f(n)+f(n)) \\
		&\ge 2r_ne^{\frac1{24n+1}-\frac2{12n}}\frac{S(2n)}{S(n)^2}g_\a(n) \\
		&\ge 2r_ne^{-1/(6n)}\frac{S(2n)}{S(n)^2}g_\a(n).
	\end{align*}
	This time, we set $\b=2$ with $r_n=Ce^{1/3n}$ (indeed non-increasing), and $\a=5/4-\lg\pi/2$, noting $4^\a<4$. Now for $n=9$, we have $\ln\frac54\ge\frac29=\frac1{2\sqrt n}+\frac1{2n}$. Since the right-hand side of this inequality is non-increasing in $n$, the bound holds for all $n\ge9$. This implies
	\begin{align*}
		\frac52e^{-1/(2\sqrt n)-2/(3n)}r_n
		&\ge \frac52e^{-1/(6n)-\ln(5/4)}r_n \\
		&= 2r_{2n} \\
		&\ge 2r_{2n+1}.
	\end{align*}
	Further, $2r_ne^{-1/6n}=2r_{2n}$, so by induction we have $f(n)\ge r_ng_\a(n)$ for all $n\ge17$, where $C$ is chosen to satisfy this bound for $17\le n<34$. In particular, $C=0.195$ works. Since $r_n>C$ for all $n\ge17$, we also have $f(n)\ge0.195g_{5/4-\lg\pi/2}(n)$ for all $n\ge17$. Finally, we check manually that this works for all $1\le n<17$ too, and thus for all $n\ge1$.
\end{proof}

\section{Optimality revisited}
\label{sec:opt-choice}

We now give a simple lower bound on the growth of $f$.
\begin{lemma}
	\label{lem:3fn}
	We have $f(n+1)\ge3f(n)$ for all $n \geq 1$.
\end{lemma}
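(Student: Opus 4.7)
The plan is to proceed by strong induction on $n$. The base case $n=1$ is immediate from the computed values $f(1)=1$ and $f(2)=4$, which give a ratio of $4 \geq 3$. For the inductive step, I would split into cases according to the parity of $n$, since the recurrence behaves quite differently when passing from odd to even versus from even to odd, and the two cases must be proven separately.

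In the odd-to-even case, write $n = 2m-1$ with $m \geq 2$. The recurrence gives $f(2m-1) = \binom{2m-1}{m-1}(f(m-1)+f(m))$ and $f(2m) = 2\binom{2m}{m} f(m)$. Using the identity $\binom{2m}{m} = 2\binom{2m-1}{m-1}$, the ratio simplifies to
$$\frac{f(n+1)}{f(n)} = \frac{4}{1 + f(m-1)/f(m)},$$
and the inductive hypothesis applied at index $m-1$ gives $f(m-1)/f(m) \leq 1/3$, yielding the bound $\geq 3$. Note that this is precisely tight when $f(m)/f(m-1) = 3$, which is what makes the constant $3$ the correct value to prove.

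In the even-to-odd case, write $n = 2m$ with $m \geq 1$. The recurrence gives $f(2m) = 2\binom{2m}{m} f(m)$ and $f(2m+1) = \binom{2m+1}{m}(f(m)+f(m+1))$. Using the identity $\binom{2m+1}{m} = \frac{2m+1}{m+1}\binom{2m}{m}$, the ratio simplifies to
$$\frac{f(n+1)}{f(n)} = \frac{2m+1}{2(m+1)}\bigl(1 + f(m+1)/f(m)\bigr),$$
and the inductive hypothesis (or the base case, for $m=1$) gives $f(m+1)/f(m) \geq 3$, so this is at least $\frac{4m+2}{m+1}$, which one checks is $\geq 3$ for every $m \geq 1$.

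I do not expect any significant obstacle here: the whole argument reduces to two one-line manipulations of binomial ratios combined with the inductive hypothesis. The only thing to be careful about is the boundary behavior — the odd case requires $m \geq 2$ so that the inductive hypothesis can be invoked at $m-1 \geq 1$, and for $n=1$ one falls back on the base case directly. Everything else is routine, and the constant $3$ turns out to be exactly the right value, being saturated in the odd-to-even case in the limit.
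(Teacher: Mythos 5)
Your proof is correct and follows essentially the same route as the paper's: strong induction with a parity split, using the identities $\binom{2m}{m}=2\binom{2m-1}{m-1}$ and $\binom{2m+1}{m}=\frac{2m+1}{m+1}\binom{2m}{m}$ together with the inductive hypothesis at the half-size index. The only difference is cosmetic (you index the odd case as $n=2m-1$ rather than $n=2m+1$ and phrase the estimates as ratios).
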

\begin{proof}
	We prove this by induction on $n$. 
	It is easy to verify 
	 the base case $f(2)=4\ge3=3f(1)$.  Otherwise $n > 1$. Suppose the desired inequality holds for all smaller values of $n$. If $n\ge2$ is odd, then let $1\le m<n$ satisfy $2m+1=n$. Then
	\begin{align*}
		f(n+1)
		&= f(2m+2) \\
		&= 2\binom{2m+2}{m+1}f(m+1) \\
		&= 2\frac{2m+2}{m+1}\binom{2m+1}mf(m+1) \\
		&= \binom{2m+1}m(f(m+1)+3f(m+1)) \\
		&\ge \binom{2m+1}m(3f(m)+3f(m+1)) \\
		&= 3f(2m+1) \\
		&= 3f(n) .
		\end{align*}
    Otherwise, if $n\ge2$ is even, then let $1\le m<n$ satisfy $2m=n$. We note that $4m+2\ge 3m+3$, so $2\frac{2m+1}{m+1}\ge3$. We then have
\begin{align*}
		f(n+1)
		&= f(2m+1) \\
		&= \binom{2m+1}m(f(m)+f(m+1)) \\
		&= \frac{2m+1}{m+1}\binom{2m}m(f(m)+f(m+1)) \\
		&\ge \frac{2m+1}{m+1}\binom{2m}m(f(m)+3f(m)) \\
		&= 2\frac{2m+1}{m+1}f(2m) \\
		&\ge 3f(2m) \\
		&= 3f(n) . \qedhere
	\end{align*}
\end{proof}

Armed with this inequality and the bounds given by Theorem~\ref{thm:fn-bounds} of Section~\ref{bounds}, we are ready to complete our proof of the optimality of $R_n$. We recall Lemma~\ref{lem:opt-choice}, which is what we have left to show:
\optchoice*
\begin{proof}
	We easily check the cases $n<12$ by hand. Let $n\ge12$ be arbitrary. It suffices to consider the cases where $k\le\fl{n/2}$, as those where $k\ge\cl{n/2}$ are symmetric. Equality for the case $k=\fl{n/2}$ is given by the definition of $f(n)$.

	Suppose that $n/6\le k<\fl{n/2}$.  Then $n > 9$, so $3n-3 > 2n+6$.  Hence
we get $$\frac{n/2-1/2}{n/2+3/2} > 2/3$$
and so
\begin{equation}
	\frac{\fl{n/2}}{\cl{n/2}+1} > 2/3. \label{medk-1}
\end{equation}
Also, from $k \ge n/6$ we get
$3k+3 \ge \cl{n/2}+2$, and so
\begin{equation}
	\frac{k+1}{\cl{n/2}+2} \ge 1/3. \label{medk-2}
\end{equation}
Then
\begin{align*}
	&\binom nk(f(k)+f(n-k)) \\
		&\ge \binom nkf(n-k) \\
		&\ge \binom nk \, 3^{\fl{n/2}-k} \, f(\cl{n/2}) & \text{(by Lemma~\ref{lem:3fn})} \\
		&\ge 3^{\fl{n/2}-k} \, \binom nk \, \frac{f(\fl{n/2})+f(\cl{n/2})}2 \\[.1in]
		&= \frac{3^{\fl{n/2}-k}}2 \, \prod_{k\le j<\fl{n/2}}\frac{j+1}{n-j} \, \binom n{\fl{n/2}}\,\lef(f(\fl{n/2})+f(\cl{n/2})\rig) \\[.1in]
		&= \frac{3^{\fl{n/2}-k}}2 \,  \frac{\prod_{k<j\le\fl{n/2}}\, j}{\prod_{\cl{n/2}<j\le n-k} \, j} \, f(n) \\[.1in]
		&= \frac{3^{\fl{n/2}-k}}2 \, \frac{\fl{n/2}}{\cl{n/2}+1} \, \frac{\prod_{k+1\le j\le\fl{n/2}-1}\, j}{\prod_{\cl{n/2}+2\le j\le n-k} \, j} \, f(n) \\[.1in]
		&> \frac{3^{\fl{n/2}-k}}2 \, (2/3) \, \prod_{1\le j\le\fl{n/2}-k-1}\frac{k+j}{\cl{n/2}+1+j} \, f(n) & \text{(by \eqref{medk-1})} \\
		&\ge 3^{\fl{n/2}-k-1}\,\lef(\frac{k+1}{\cl{n/2}+2}\rig)^{\fl{n/2}-k-1}\,f(n) \\
		&\ge 3^{\fl{n/2}-k-1}\,(1/3)^{\fl{n/2}-k-1}\,f(n) & \text{(by \eqref{medk-2})} \\
		&= f(n).
	\end{align*}
	Next, suppose $1\le k<n/6$.  Then
$4k < n-n/3$ and so
$4 < \frac{n-1}k$.  Then if $k>1$,
\begin{align*}
		\binom nk
		&= n\prod_{2\le j\le k}\frac{n-1-k+j}j \\
		&\ge n\lef(\frac{n-1}k\rig)^{k-1} \\
		&\ge n4^{k-1}.
	\end{align*}
	We note also that $\binom n1=n4^0$, so we have $\binom nk\ge n4^{k-1}$ for all $1\le k<n/6$. We note from our proof of Lemma~\ref{lem:ga} that the derivative of $\ln g_\a(x)$ is at most $\ln4$ for $x\ge4^\a$. In particular, for $\a=5/4-\lg\pi/2$, this derivative is at most $\ln4$ for all $x\ge2$, and so $4^kg_\a(n-k)\ge g_\a(n)$ here (as $n-k>5n/6\ge10$).
	Thus
	\begin{align*}
	& \binom nk(f(k)+f(n-k)) \\
		&\ge \binom nkf(n-k) \\
		&\ge n4^{k-1}\lef(0.195g_{5/4-\lg\pi/2}(n-k)\rig) & \text{(by Theorem~\ref{thm:fn-bounds})} \\
		&\ge n4^{k-1}\lef(0.195\frac{g_{5/4-\lg\pi/2}(n)}{4^k}\rig) \\
		&= 0.195n{1\over4}g_{5/4-\lg\pi/2}(n) \\
		&= 0.195n{1\over4}n^{2-\lg5}g_{\lg5-3/4-\lg\pi/2}(n) \\
		&\ge 0.195n^{3-\lg5}f(n) & \text{(by Theorem~\ref{thm:fn-bounds})} \\
		&> f(n) & \text{(since $n\ge12>0.195^{-\frac1{3-\lg5}}$)} . & \qedhere
	\end{align*}
\end{proof}

\bibliographystyle{new}
\bibliography{perm}


\end{document}